\documentclass{article}
\usepackage{fullpage}
\usepackage{amsmath,amsfonts,amssymb,dsfont,bm}
\usepackage{bbm}
\usepackage{graphicx, listings, cite, hyperref}

\usepackage{algorithm2e}

\usepackage{comment}
\newcommand{\eps}{\epsilon}


\usepackage{geometry}
\geometry{hmargin=3.5cm,vmargin=3.5cm}

\usepackage{lmodern}

\usepackage{graphicx}
\usepackage{lscape}
\usepackage[final]{pdfpages}
\usepackage{amsthm}
\usepackage{amsmath,bm}
  {
      \theoremstyle{plain}
      
      \newtheorem{theorem}{Theorem}[section]
      \newtheorem{lemma}[theorem]{Lemma}
      \newtheorem{definition}[theorem]{Definition}

     \newtheorem{conjecture}[theorem]{Conjecture}

  }
\usepackage{amsfonts}
\usepackage{mathtools}
\usepackage[]{algorithm2e}
\usepackage{color}
\DeclareMathOperator{\sign}{sign}

\author{Alexandr Andoni\\CS and DSI\\Columbia University \and David Cheikhi\\Columbia University}

\title{From Average Embeddings To Nearest Neighbor Search\footnote{Research supported in part by NSF grants (CCF-2008733, CCF-1617955, and CCF-1740833), and Simons Foundation (\#491119).}}
\begin{document}

\maketitle


\begin{abstract}
    In this note, we show that one can use average embeddings, introduced recently in \cite{naor2020average}, to obtain efficient algorithms for approximate nearest neighbor search. In particular, a metric $X$ embeds into $\ell_2$ on average, with distortion $D$, if, for any distribution $\mu$ on $X$, the embedding is $D$ Lipschitz and the (square of) distance does not decrease on average (wrt $\mu$). In particular existence of such an embedding (assuming it is efficient) implies a $O(D^3)$ approximate nearest neigbor search under $X$. This can be seen as a strengthening of the classic (bi-Lipschitz) embedding approach to nearest neighbor search, and is another application of data-dependent hashing paradigm.
\end{abstract}

\section{Introduction}
In the $c$-Approximate Nearest Neighbor Search ($c$-ANN) problem, we preprocess a dataset $P$ living in some metric space $(X,d_X)$ for some threshold $r>0$,  such that, for a given query $q\in X$, we can efficiently find a point in $p\in P$ at distance $d_X(p,q)\le cr$, as long as there exists a point $p^*\in P$ at distance $d_X(q,p^*)\le r$. In many applications, the metric space is high-dimensional, with the $d$-dimensional Euclidean space $\Re^d$ being the most well-studied. See, e.g., survey \cite{andoni18-icm} for further background on high-dimensional ANN problem.

A common approach for solving the ANN problem is via 
Locality Sensitive Hashing (LSH) \cite{HIM12}, which can be viewed as a random, oblivious space partition. LSH mainly applies to the Euclidean ($\ell_2$) and Hamming ($\ell_1$) spaces. For example, for any fixed $\eps>0$, one can solve ANN under $
\ell_2$ with $c=\sqrt{1/\eps}+o(1)$ approximation, $O(n^\eps d)$ query time and $O(n^{1+\eps}d)$ space (we call such query/space parameters as "efficient" henceforth).

Beyond $\ell_1$ and $\ell_2$ metrics, there has been much less
progress and the best bounds are usually far from understood. A
long-standing approach is via {\em metric embeddings}: a map from a metric $(X,d_X)$ into an "easier" space, say, $(\Re^d,\ell_2)$, while approximately preserving all the distances:

\begin{definition}[bi-Lipschitz embedding into $\ell_2$]
A map $f:X\to \ell_2$ is an embedding  into $\ell_2$, with distortion $D\ge 1$, iff for any $x,y\in X$ we have that
$$
\|f(x)-f(y)\|_2\le d_X(x,y)\le D\cdot \|f(x)-f(y)\|_2.
$$
\end{definition}

Using such an embedding for a metric $(X,d_X)$, one can immediately extend the ANN under $\ell_2$ algorithms to ANN under $X$, with an extra factor $D$ approximation. Indeed, this approach yields best-known ANN algorithms for spaces such as the edit distance \cite{OR-edit}. However, there are also known lower bounds for the best possible distortion, including in the case of edit distance \cite{KN, KR06}. A possible extension of the embedding approach is to embed a {\em snowflake} of the metric $(X,d_X)$ which means embedding $(X,d_X^{\omega})$ for some $\omega<1$, though oftentimes there are lower bounds for this approach as well (albeit weaker), see, e.g., \cite{AK07} for our example of edit distance.

A recent qualitative development is that of {\em data-dependent} hashing/space partitioning methods.  For $\ell_2$ and $\ell_1$ spaces, \cite{AINR-subLSH, AR-optimal, alrw17-tradeoffs} developed data-dependent LSH that gave a better approximation: e.g., $c=\sqrt{\tfrac{1+\eps}{2\eps}}+o(1)$ for $\ell_2$; also see~\cite{Ind98} for an earlier example of data-dependent technique for $\ell_\infty$.

For metrics beyond $\ell_1, \ell_2$, \cite{annrw18-spectralGaps} developed a new data-dependent approach, based on the following geometric property of the space $(X,d_X)$, termed cutting modulus:

\begin{definition}[Cutting modulus]
For fixed $\eps>0$, the {\em cutting modulus} of the metric ${\cal X}=(X,d_X)$, denoted {\em $\Xi({\cal X},\eps)$} is the infimum $D$ such that the following holds for any $r>0$ and any integer $m>1$.  Fix a pointset $V=\{x_1,\ldots x_m\}\subset X$, and consider a positively-weighted graph $G=(V,E,w)$ where $w_{pq}>0 \implies d_X(p,q)\le r$ and $\sum_{p\in V}\sum_{q\in V} w_{p,q}=1$ (i.e., it's a distribution over the ordered pairs $(p,q)$). Then, one of the following must hold:
\begin{itemize}
    \item Either there exists some point $z\in X$ such that $\sum_{p:d_X(z,p)\le Dr} \sum_{q\in V} w_{p,q}\ge 1/2$ (i.e., there is a ball of radius $Dr$ containing 1/2 mass of the pointset),
    \item Or there exists a non-trivial cut $S\subset V$ with conductance $\Phi(S)\le \eps$.
\end{itemize}
\end{definition}

The main result of \cite{annrw18-spectralGaps} is that any metric $X$ with cutting modulus $D$ admits a $D$-ANN, albeit with a couple important caveats. First, without further properties, the query algorithm is efficient only in the cell-probe model, as it involves operating with an exponentially-sized graph $G$ (think a net in $\cal X$). Nonetheless, as \cite{annrw18-spectralGaps} show, one can use a slightly stronger notion of cutting modulus to remove this caveat, for example obtaining efficient ANN for $\ell_p$ and Schatten-$p$ norms, with $O(p/\eps)$ approximation, for any $p\ge 2$. Second, the preprocessing time is exponential and seems unavoidable for this approach.

There are currently few bounds on the cutting modulus, and most of them follow from the theory of {\em non-linear spectral gaps}, whose systematic study was started in \cite{mendel2014nonlinear}; see also \cite{naor2014comparison,mendel2015expanders}. Perhaps most striking is that the cutting modulus of any $d$-dimensional norm $(\Re^d,\|\cdot\|_X)$ is $O(\tfrac{\log d}{\eps})$, following the bound on the non-linear spectral gap from \cite{naor16-spectralGap}. 

The work of \cite{naor2020average} introduced the notion of {\em average embedding} and proved a connection to non-linear spectral gap bounds. An average embedding relaxes the bi-Lipschitz embedding from above in that the contraction holds only on average with respect to a fixed given measure (and hence the embedding is "data dependent"):

\begin{definition}[Average embedding] For a metric ${\cal X}=(X,d_X)$, a map $f:X \rightarrow \ell_2$ is an {\em $q$-average embedding}, for a dataset $P\subset X$, with distortion $D$ if the following is satisfied:
\begin{align*}
    \|f(p) - f(q)\|_2 &\leq D\cdot d_X(p,q) \quad \text{ for all $p,q \in X$}\\
     \sum_{p,q \in P} \|f(p)-f(q)\|_2^q&\ge \sum_{p,q \in P} d_X(p,q)^q. 
\end{align*}
When $q=2$, we call it simply {\em average embedding}.
\end{definition}

In particular,  \cite{naor2020average} showed that the average embedding distortion for $1/2$-snowflake of a $d$-dimensional normed space is $O(\sqrt{\log d})$. Similarly, \cite{naor2014comparison, naor2020average} show that the average embedding distortion for $\ell_p$ is $O(p)$, for $p>2$. Both of these embeddings are non-constructive, proven by duality.

\paragraph{Our result.} In this note, we show that one can use average
embeddings directly to solve ANN.

\begin{theorem}
Suppose a metric $X$ has an average embedding $f$ with distortion $\sqrt{D}$, where time to construct $f$ is $T_P$ and time to compute $f$ on a point $q$ is $T_c$. Then there is a $c$-ANN with query time $O(n^{\eps}\cdot T_c)$, space $n^{O(1)}$, and preprocessing time $n^{O(1)}\cdot T_P$, and approximation $c=O(D^{3}/\epsilon)$.
\end{theorem}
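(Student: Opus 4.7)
The plan is to use the average embedding as the engine inside a data-dependent partition tree of $P$, in the style of \cite{AINR-subLSH,annrw18-spectralGaps}. Each internal node $v$ of the tree is tagged by a subset $V_v\subseteq P$ (with the root tagged by all of $P$). In preprocessing, for each such node we compute the $\sqrt D$-average embedding $f_v:=f_{V_v}:X\to\ell_2$ of the current subset, at cost $T_P$, and draw a random Andoni--Indyk Euclidean ball-carving of $\ell_2$ at scale $R_v=\Theta(\sqrt D\, r/\eps)$; the induced partition of $V_v$ into cells becomes the children of $v$. The recursion stops when the subset is small enough for brute-force inspection. At query time we descend the tree, at each node computing $f_v(q)$ (cost $T_c$) and following the $\ell_2$-cell containing $f_v(q)$, and finally we scan the leaf candidates. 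Independent copies of the tree boost the success probability, and by the standard LSH analysis the overall query time is $O(n^\eps\cdot T_c)$ while the total space over all copies stays $n^{O(1)}$.

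For the true nearest neighbor $p^{\ast}$ with $d_X(q,p^{\ast})\le r$, the Lipschitz part of the embedding gives $\|f_v(q)-f_v(p^{\ast})\|_2\le\sqrt D\, r$ at every node $v$ along $q$'s descent path. The standard analysis of Euclidean ball-carving then shows that this short pair is separated at node $v$ with probability only $O(\sqrt D\, r/R_v)=O(\eps)$, so across the $O(\log n)$ tree depth $p^{\ast}$ survives in the same cell as $q$ with constant probability, boosted to high probability by the independent tree copies. This side of the argument only needs the Lipschitz half of the average-embedding definition.

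The main obstacle, and the source of the $O(D^3/\eps)$ approximation, is showing that every candidate surviving in the same leaf as $q$ is within $d_X$-distance $O(D^3 r/\eps)$ of $q$. Here the average-embedding inequality applied at the current node,
\[
\sum_{p,p'\in V_v} d_X(p,p')^2 \;\le\; \sum_{p,p'\in V_v} \|f_v(p)-f_v(p')\|_2^2,
\]
combined with the $\ell_2$-diameter bound $R_v$ produced by the ball-carving, caps the typical intra-cell squared $d_X$-distance by $O(R_v^2)$; a Markov argument then yields a representative $z_v\in V_v$ such that a constant fraction of $V_v$ lies inside an $X$-ball of radius $O(R_v)$ about $z_v$. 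Propagating this structure down the path and stitching together triangle inequalities along $q\to p^{\ast}\to z_v\to p$ for any surviving $p$ yields the claimed approximation. The technical difficulty is that the average-embedding inequality is a one-sided bound on \emph{sums}, not on individual pairs, so pulling back $\ell_2$-proximity to $d_X$-proximity is only possible in an average sense; each Markov rounding costs a constant factor in the radius while shrinking the surviving fraction, and it is the careful accumulation of these losses --- across scales and across the $O(\log n)$ depth --- that is responsible for the $D^3/\eps$ exponent rather than a smaller one.
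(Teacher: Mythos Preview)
Your proposal has a genuine gap in the ``far points'' half of the argument, and it stems from using the average-embedding inequality in a direction it does not support.

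You write the inequality correctly as $\sum_{p,p'\in V_v} d_X(p,p')^2 \le \sum_{p,p'\in V_v}\|f_v(p)-f_v(p')\|_2^2$, but then claim that combining it with the $\ell_2$-diameter bound $R_v$ of a \emph{cell} caps typical intra-cell $d_X$-distances. This does not follow: the inequality is global over $V_v$, not over any sub-cell, and ball-carving bounds only the $\ell_2$-diameter of each cell, not of $V_v$. There is no way to restrict the sum to a cell, because $f_v$ is an average embedding for $V_v$, not for subsets. Even your Markov step (``a constant fraction of $V_v$ lies in an $X$-ball of radius $O(R_v)$'') would need $\sum_{p,p'\in V_v}\|f_v(p)-f_v(p')\|_2^2=O(|V_v|^2 R_v^2)$, which is false before carving. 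More fundamentally, the embedding gives you no pointwise lower bound $\|f(p)-f(p')\|\gtrsim d_X(p,p')$, so small $\ell_2$-distance in a cell tells you nothing about $d_X$-distance there. The vague ``stitching along $q\to p^\ast\to z_v\to p$'' cannot be made to work, and your explanation that $D^3/\eps$ arises from accumulating Markov losses across $O(\log n)$ levels is not what actually happens.

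The paper's argument runs in the opposite direction and needs two ingredients you omit. First, it reduces (via \cite{DBLP:journals/tcs/BartalG19}) to the $\beta$-\emph{bounded} case where every pair in $P$ satisfies $r\le d_X(p,p')\le \beta c r$; this upper bound on $d_X$ is essential. Second, each tree node has a \emph{dense-ball} branch: if some $X$-ball of radius $\lambda D$ contains $>|P|/8$ of the points, store a representative and peel that ball off. The approximation factor $c=\lambda D$ comes entirely from this branch. In the non-dense branch, the average-embedding inequality is used the other way: since a $(1-1/8)$ fraction of pairs have $d_X\ge \lambda D$, the sum $\sum d_X^2$ is large, hence $\sum\|f(p)-f(p')\|^2$ is large; together with the Lipschitz upper bound $\|f(p)-f(p')\|\le \beta c D$, this forces at most a $1-\Omega(1/D^2)$ fraction of $f(P)$ to lie in $B_{\ell_2}(f(q),wD)$. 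An $(p_1,p_2,D,wD)$-LSH in $\ell_2$ then sheds a $\Omega((1-p_2)/D^2)$ fraction of $P$ per level while keeping $p^\ast$ with probability $p_1$; choosing $w=\Theta(D^2/\eps)$ and $\lambda=\Omega(w)$ gives $\rho=\eps$ and $c=\lambda D=O(D^3/\eps)$ from a single node's accounting, not from depth accumulation.
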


Our approach avoids some of the aforementioned caveats of using cutting modulus, as long as we can construct an efficient average embedding for a given metric $X$. In particular, there is no more need for exponential time preprocessing (again assuming efficient average embedding map). 

At high-level, our algorithm is similar to the ANN approach via standard bi-Lipschitz embedding: embed the space $X$ into $\ell_2$ and then use an efficient ANN for $\ell_2$, though the actual algorithm is a little bit more nuanced. The overall algorithm is "data-dependent" because the average embedding $f$ is data-dependent (note that $f$ from above may depend on the dataset $P$). In Appendix \ref{apx:ell_p} we also propose an explicit, efficient average embedding for $\ell_p$, $p>2$, albeit we leave it as a conjecture to prove its second property.

\paragraph{Independent work.} In independent work, \cite{knt21-average} showed a similar result, obtaining a tighter bound of $O(D\log D)$-ANN. Their algorithm requires one more step: to construct a weak average embedding out of a generic average embedding. They also show how to compute the $1$-average embedding for $\ell_p$ with distortion $O(p)$ (thus also resolving a related version of our conjecture), and hence obtain an $O(p)$-ANN with polynomial-time preprocessing (removing also the $O(\log p)$ from their generic reduction). Their algorithm thus improves over ANN from \cite{DBLP:journals/tcs/BartalG19,annrw18-spectralGaps}.

\section{ANN Algorithm from Average Embedding}

We now describe our main ANN algorithm that uses an average embedding. Our algorithm is solves a special case of the ANN problem, termed {\em bounded ANN}: 
the {\em $\beta$-bounded ANN} is the problem where we are  guaranteed that all distances within the dataset $P$ verify $r \leq d_X(p,q) \leq \beta\cdot cr$. 
We note that it is enough to solve $\beta$-bounded ANN problem, for $\beta=18$, due to the result of \cite{DBLP:journals/tcs/BartalG19}. 

Our algorithm uses an LSH scheme for $\ell_2$, formally defined as follows.

\begin{definition}(LSH)
For a metric $X$, a family of hash functions $h:X\to U$ is $(p_1,p_2,r,cr)$-LSH if, for all $x,y\in X$: \begin{itemize}
    \item if $d_X(x,y)\le r$, then
$\Pr[h(x)=h(y)]\ge p_1$,
\item if $d_X(x,y)> cr$, then $\Pr[h(x)=h(y)]< p_2$.
\end{itemize}
The exponent of the LSH scheme is defined as $\rho=\tfrac{\log 1/p_1}{\log 1/p_2}$.
\end{definition}

The metric $X=\ell_2$ admits $(p_1,p_2,r,cr)$-LSH, for $\rho=1/c$ with $p_1$ arbitrarily close to 1 \cite{DIIM}. We also assume that $r=1$ henceforth for simplicity.

Fix some $\lambda, w\ge1$ tbd. We construct the data structure for a $\beta$-bounded pointset $P$ recursively as a collection of randomized trees, constructed iid. A tree  recursively partitions the dataset until the size of the dataset  becomes constant (below think of $P$ is the dataset assigned to the current node). There are two type of nodes:
\begin{itemize}
    \item If there exists $x_0 \in P$ such that $|P \cap B_X(x_0,
      \lambda D))| > |P|/8$, we store $x_0$ and a point $p_0 \in P
      \cap B_X(x_0, \lambda D)$. For the rest of the points, we create a single child inheriting $P \setminus B_X(x_0, \lambda D))$, and recurse the construction on it.
    
    \item If there is no $x_0 \in P$ such that $|P \cap B_X(x_0, \lambda D)| > |P|/8$, then we compute an average embedding $f:X\to \ell_2$ for $P$. Then we sample a $(p_1,p_2,D,wD)$-LSH $h:\ell_2\to U$ and use it to partition the pointset $P$. In particular, each part $k\in U$ generates a child node $k$ that inherits the points $p\in P$ with $h(f(p))=k$ (storing only non-empty parts as usual).
\end{itemize}
For a given query point $q$, we traverse the tree as follows:
\begin{itemize}
    \item if the node is of the first type: if $d(x_0, q) < \sqrt{D}$, return the associated $p_0$, and otherwise recurse into the single child;
    \item if the node is of the second type: branch into the child node $k=h(f(q))$.
    \item in a leaf node: check the distance to all stored nodes and return a valid answer if there exists one.
\end{itemize}

\begin{theorem}
\label{thm:oneTree}
For any $\epsilon>0$, there is $\lambda, w$, and $c=O(D^{3}/\epsilon)$ such that
the tree constructed as above has a probability at least $n^{-\epsilon}$ to find a $cr$-near neighbor, assuming there's an $r$-near neighbor.
\end{theorem}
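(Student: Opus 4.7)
The plan is to analyze a single random tree, tracking the fate of the near neighbor $p^*\in P$ (which exists by assumption, with $d_X(q,p^*)\le 1$) as the query $q$ descends the tree. I will lower-bound the joint probability that $p^*$ is never separated from $q$ during the descent and that the algorithm eventually outputs a valid $cr$-near neighbor (either some $p_0$ at a type-1 node in Case A, or $p^*$ itself at a leaf). The product of these probabilities over the levels of the tree should give the claimed $n^{-\epsilon}$.

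Type-1 nodes are essentially deterministic. In Case A we have $d_X(q,p_0)\le d_X(q,x_0)+d_X(x_0,p_0)<\sqrt{D}+\lambda D$, which is at most $cr$ as long as $c$ is chosen sufficiently large. In Case B, the triangle inequality yields $d_X(p^*,x_0)\ge d_X(q,x_0)-d_X(q,p^*)\ge \sqrt{D}-1$, so picking $\lambda$ with $\lambda D<\sqrt{D}-1$ (for instance $\lambda=\Theta(1/\sqrt{D})$) guarantees $p^*\notin B_X(x_0,\lambda D)$ and hence $p^*$ is preserved in the single child. Thus each type-1 node either terminates with a valid answer or shrinks $|P|$ by a factor of $7/8$ while preserving $p^*$, with no loss in success probability.

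At a type-2 node, the $\sqrt{D}$-Lipschitz property of $f$ gives $\|f(q)-f(p^*)\|_2\le \sqrt{D}\le D$, so the LSH satisfies $h(f(p^*))=h(f(q))$ with probability at least $p_1$; i.e., $p^*$ lands in the same child as $q$. The technical heart of the argument is to show simultaneously that the expected size of that child is at most a constant fraction of $|P|$. The no-heavy-ball hypothesis gives $\sum_{p,p'\in P}d_X(p,p')^2=\Omega(|P|^2(\lambda D)^2)$, and the average-embedding property transfers this lower bound directly to $\sum_{p,p'\in P}\|f(p)-f(p')\|_2^2$. A second-moment (reverse Markov) argument, combined with the Lipschitz upper bound $\|f(p)-f(p')\|_2\le \sqrt{D}\cdot 18c$ on each individual pair, then shows that a noticeable fraction of $p\in P$ satisfy $\|f(p)-f(p^*)\|_2=\Omega(\lambda D)$; a triangle inequality with $\|f(q)-f(p^*)\|_2\le \sqrt{D}$ propagates this to the same fraction of $p$ lying at $\ell_2$-distance $\gtrsim \lambda D$ from $f(q)$. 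Choosing $wD$ just below this quantity, the $(p_1,p_2,D,wD)$-LSH separates these ``far'' points from $q$ with probability at least $1-p_2$, producing an expected constant-factor shrinkage of $|P|$ per type-2 level.

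Putting the two pieces together, the tree has depth $O(\log n)$, and the probability that $p^*$ follows $q$ along the entire path is at least $p_1^{O(\log n)}$. Taking $p_1\to 1$ and setting the LSH exponent $\rho=\log(1/p_1)/\log(1/p_2)$ equal to $\epsilon$ requires $w=\Theta(1/\sqrt{\epsilon})$ for standard $\ell_2$ LSH. The main obstacle is the type-2 child-size bound: because the average-embedding guarantee only controls a sum of squared distances, converting it to a per-point separation via second moments together with Lipschitz upper bounds loses polynomial factors in $D$. These losses, combined with the $\sqrt{D}$ Lipschitz blowup on $\|f(q)-f(p^*)\|_2$ and the type-1 constraint $\lambda D=O(\sqrt{D})$, are what force the final approximation to $c=O(D^3/\epsilon)$ rather than something closer to $O(D/\epsilon)$.
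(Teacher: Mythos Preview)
Your high-level plan matches the paper's, but two quantitative steps are wrong, and together they leave the derivation of $c=O(D^{3}/\epsilon)$ unsupported.

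The main gap is the claim of ``constant-factor shrinkage'' at a type-2 node. The second-moment argument you sketch is precisely the paper's key lemma, but its output is only that an $\Omega\bigl(\lambda^{2}/(\beta c)^{2}\bigr)$ fraction of $P$ lands outside $B_{\ell_2}(f(q),wD)$: the Lipschitz upper bound on each embedded pair is of order $D\cdot\beta c$, while the lower bound coming from the no-heavy-ball condition is of order $\lambda D$, and only the squared ratio survives. Since $c$ must already be at least $\lambda D$ (that is the radius used at type-1 nodes), the far-fraction is merely $\Omega(1/D^{2})$, and one LSH split leaves an expected $p_2'\le 1-(1-p_2)\cdot\Omega(1/D^{2})$ fraction of $P$, not a constant bounded away from $1$. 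Hence the depth is not $O(\log n)$; the correct bookkeeping is through the effective exponent $\rho'=\log(1/p_1)/\log(1/p_2')$. Forcing $\rho'=\epsilon$ requires $1-p_1=\Theta\bigl(\epsilon(1-p_2)/D^{2}\bigr)$, which for an $\ell_2$ LSH at gap $w$ means $w=\Theta(D^{2}/\epsilon)$, not $\Theta(1/\sqrt{\epsilon})$. The lemma's side condition $(1-p)\lambda^{2}=\Omega(w^{2})$ then forces $\lambda=\Omega(D^{2}/\epsilon)$, and finally $c=\lambda D=O(D^{3}/\epsilon)$. This chain --- the $1/D^{2}$ loss in the far-fraction driving $w$, then $\lambda$, then $c$ --- is the actual source of the cubic bound, and it is absent from your argument.

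The second gap is your type-1 constraint $\lambda D<\sqrt{D}-1$, i.e.\ $\lambda=O(1/\sqrt{D})$. This is flatly incompatible with $\lambda=\Omega(D^{2}/\epsilon)$ above (and with $w\ge 1$, since you also need $wD\lesssim\lambda D$ for the far threshold). The paper imposes no such bound: it simply asserts that type-1 nodes ``do not affect correctness'' and takes $\lambda$ large. Reading the type-1 query threshold as $\Theta(\lambda D)$ rather than the literal $\sqrt{D}$, whenever $p^{*}\in B_X(x_0,\lambda D)$ one has $d_X(q,x_0)\le\lambda D+1$ and returns $p_0$ at distance $\le 2\lambda D+1=O(c)$; otherwise $p^{*}$ passes to the child. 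Either way no upper bound on $\lambda$ arises, and the parameters become consistent. Your concluding sentence lists ``the type-1 constraint $\lambda D=O(\sqrt{D})$'' as one of the forces pushing $c$ \emph{up} to $D^{3}/\epsilon$, but a small $\lambda$ would push $c$ down; this signals that the parameter dependencies have not been worked out.
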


Hence, we can build a $\beta$-bounded ANN which constructs $n^{\epsilon}$ trees as above, and uses 
space $n^{1+O(\epsilon)}$ for approximation $c=O(D^{3}/\epsilon)$. The reduction from \cite{DBLP:journals/tcs/BartalG19} then yields an algorithm for the vanilla ANN.

\subsection{Analysis: Proof of Theorem~\ref{thm:oneTree}}

First we note that nodes of the first type do not affect the correctness. So in the rest we fix a node of the tree above, and suppose we are in the second case: there is no $x_0 \in X$ such that $|P \cap B_X(x_0, \lambda D)| > |P|/8$. We further assume that $P$ contains a near neighbor point $p^*$, with $d(q,p^*)\le1$. We first prove the following lemma bounding the number of points that are "close" to $q$ after the embedding $f$.

\begin{lemma}
Consider a set $P$ satisfying:
\begin{itemize}
    \item $|B_X(x, \lambda D) \cap P | \leq p |P|$ for all $x \in P$;
    \item For all points $x, y \in P$, $d_X(x,y) \leq \beta c$.
\end{itemize}
Consider a query $q$ at distance at most 1 from some point in $P$.
Then, we have that $$\left|B_{\ell_2}(f(q),wD) \cap f(P)\right| \leq  1 - \dfrac{(1-p)\lambda^2}{4\beta^2c^2},$$ as long as $(1-p)\lambda^2=\Omega(w^2)$, and $\beta c\ge \lambda$.
\end{lemma}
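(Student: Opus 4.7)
The plan is to transfer the ball-non-concentration hypothesis in $X$ over to an $\ell_2$ ``spread'' statement around $f(q)$ via the average embedding, and then deduce the count bound by a variance-style calculation. Throughout, I will use that $f$ is $\sqrt D$-Lipschitz and satisfies the squared-sum expansion $\sum_{x,y\in P}\|f(x)-f(y)\|_2^2\ge \sum_{x,y\in P}d_X(x,y)^2$.

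First I would lower-bound the total squared distance in $X$. The hypothesis $|B_X(x,\lambda D)\cap P|\le p|P|$ for every $x\in P$ means at least $(1-p)|P|$ points $y$ are at distance more than $\lambda D$ from $x$, so
$$\sum_{y\in P} d_X(x,y)^2 \ge (1-p)|P|\lambda^2 D^2,$$
and summing over $x\in P$ gives $\sum_{x,y\in P}d_X(x,y)^2\ge (1-p)|P|^2\lambda^2 D^2$. Applying the average-embedding expansion, the same lower bound holds for $\sum_{x,y\in P}\|f(x)-f(y)\|_2^2$. Now using the standard identity $\sum_{x,y\in P}\|f(x)-f(y)\|_2^2=2|P|\sum_{x\in P}\|f(x)-\bar f\|_2^2$ (where $\bar f$ is the centroid) together with the parallel-axis inequality $\sum_{x\in P}\|f(x)-z\|_2^2\ge \sum_{x\in P}\|f(x)-\bar f\|_2^2$ applied at $z=f(q)$, I obtain
$$\sum_{x\in P}\|f(x)-f(q)\|_2^2 \;\ge\; \tfrac{1}{2}(1-p)|P|\,\lambda^2 D^2.$$

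Next I would upper-bound the same sum by splitting $P$ according to whether $x$ lies in $B:=B_{\ell_2}(f(q),wD)\cap f(P)$. Points in $B$ contribute at most $w^2 D^2$ each. For points outside $B$, the $\sqrt D$-Lipschitz property combined with the diameter bound $d_X(x,q)\le d_X(x,p^*)+1\le \beta c+1=O(\beta c)$ (using $\beta c\ge\lambda\ge 1$) yields $\|f(x)-f(q)\|_2^2\le O(D\beta^2 c^2)$. Putting these together,
$$\tfrac{1}{2}(1-p)\lambda^2 D^2 \;\le\; \tfrac{|B|}{|P|}\,w^2 D^2 \;+\; \left(1-\tfrac{|B|}{|P|}\right) O(D\beta^2 c^2),$$
and solving for $|B|/|P|$, absorbing the $w^2 D^2$ term into the $(1-p)\lambda^2 D^2/2$ term using the assumption $(1-p)\lambda^2=\Omega(w^2)$, gives an upper bound of the form $|B|/|P|\le 1-\Omega\bigl((1-p)\lambda^2/(\beta^2 c^2)\bigr)$, which matches the statement up to absolute constants (and modulo what appears to be a missing $|P|$ factor on the right-hand side of the displayed bound).

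The conceptual work is in picking the right reference: neither restricting the expansion to $B\times B$ nor working from $f(p^*)$ directly is clean, because the average-embedding lower bound is a global pairwise sum over $P$, not a sum to a single point or over a subset. The only mildly nontrivial step is thus the variance-to-single-point reduction via $\bar f$ and the parallel-axis identity; once this is in place, everything is a direct calculation. I expect the main friction to lie in checking constants (in particular, how tightly the hypothesis $(1-p)\lambda^2=\Omega(w^2)$ must be instantiated) rather than in any conceptual obstacle.
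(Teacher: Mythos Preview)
Your argument is correct; it is a mild variant of the paper's proof rather than a fundamentally different approach. The paper also starts from the lower bound $\sum_{x,y\in P}d_X(x,y)^2\ge (1-p)|P|^2\lambda^2 D^2$ and the average-embedding inequality, but then upper-bounds the \emph{pairwise} sum $\sum_{x,y}\|f(x)-f(y)\|_2^2$ directly: an $\alpha^2$ fraction of ordered pairs have both endpoints in $B_{\ell_2}(f(q),wD)$ and contribute at most $4w^2D^2$ each, while the remaining $1-\alpha^2$ fraction is bounded via Lipschitzness and the $X$-diameter $\beta c$. Solving the resulting inequality gives a bound on $\alpha^2$, hence on $\alpha$. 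You instead insert the extra centroid/parallel-axis step to pass to the single-point sum $\sum_x\|f(x)-f(q)\|_2^2$ and then split points (not pairs) according to membership in the ball, obtaining an inequality linear in $\alpha$. Both routes yield the same conclusion up to constants; the paper's is one line shorter, while yours makes the role of $f(q)$ as a reference point explicit and avoids squaring $\alpha$. Your parenthetical about the missing $|P|$ normalization on the right-hand side of the displayed bound is also correct: the paper is bounding the \emph{fraction} $\alpha=|B|/|P|$.
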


\begin{proof}

Let $\alpha = \tfrac{\left|B_{\ell_2}(f(q), wD) \cap f(P)\right|}{|P|}$. By the second property of the average embedding for $P$:
\begin{align*}
    \sum_{x,y \in \mathcal{P}} d(x,y)^2 &\leq  \sum_{x,y \in \mathcal{P}} \|f(x)-f(y)\|^2\\ 
    \implies 
    \sum_{x,y \in \mathcal{P}} d(x,y)^2 &\leq 
     (1-\alpha^2)\beta^2c^2D^2n^2+4\alpha^2 w^2D^2n^2\\
    \intertext{using $D$-Lipschitzness of $f$, and that for an $\alpha$ ratio of $x\in P$, $\|f(x)-f(q)\| \leq wD$. }
    \implies   n^2(1-p)\lambda^2D^2 &\leq 
    \beta^2c^2D^2n^2-\alpha^2 (\beta^2c^2D^2-4w^2D^2)n^2
    \\
    \intertext{since $d_X(x,y) \geq \lambda D$ for a $1-p$ ratio of $x,y \in P$.}
    \implies  
    \alpha^2 &\le \frac{\beta^2 c^2D^2-(1-p)\lambda^2D^2}{\beta^2c^2D^2-4w^2D^2}
    \\
    \implies  
    \alpha^2 &\le \frac{\beta^2 c^2-(1-p)\lambda^2}{\beta^2c^2-4w^2}
    \\
    \implies  
    \alpha^2 &\le (1-\tfrac{(1-p)\lambda^2}{\beta^2c^2})(1+O(w^2/\beta^2c^2))
    \\
    \implies  
    \alpha^2 &\le 1-\tfrac{(1-p)\lambda^2-O(w^2)}{\beta^2c^2}.
    \\
    \implies  
    \alpha &\le 1-\tfrac{(1-p)\lambda^2}{4\beta^2c^2},
\end{align*}
for $(1-p)\lambda^2\ge \tfrac{1}{2}O(w^2)$.
\end{proof}

We use the lemma from above with $p=1/8$, $\beta=18$, and $c=\lambda D$.
After the embedding $f$, it is guaranteed that $|f(P) \cap B(f(q), w D)| > |P|\cdot(1 - \Omega\left(\dfrac{1}{D^2}\right))$. Using an $(p_1,p_2,D,wD)$-LSH in the embedded space, the fraction of points that collides with the query is:
\begin{equation*}
    p_2' \le 1 - \Omega\left(\dfrac{1}{D^2}\right) + p_2 \left(1-(1 - \Omega\left(\dfrac{1}{D^2}\right))\right) \le 1 - (1-p_2)\dfrac{\Omega(1)}{D^2}.
\end{equation*}

Therefore, embedding the query and the dataset and applying an $(p_1,p_2,D,wD)$-LSH is like applying an $(p_1,p_2',1,\lambda D)$-LSH in the original space. Therefore, we pick
\begin{equation*}
    p_1 = 1 - \epsilon (1-p_2') = 1 -\epsilon (1-p_2)\dfrac{\Theta(1)}{D^2}. 
\end{equation*}

Such a choice is possible as long as $w\ge \Theta(D^2/\epsilon)$. As, from above, we require $\lambda \geq \Omega(w)$, we have that $\lambda \geq \Omega(D^2/\epsilon)$. Finally, we get that the approximation ratio is $c=\lambda D=O(D^3/\epsilon)$. We obtain $\rho=\eps$ and hence $n^{-\eps}$ success probability.





\bibliographystyle{alpha}
\bibliography{bib,andoni}

\newcommand{\etalchar}[1]{$^{#1}$}
\def\cprime{$'$} \def\cprime{$'$}
\begin{thebibliography}{ALRW17}

\bibitem[AINR14]{AINR-subLSH}
Alexandr Andoni, Piotr Indyk, {Huy L.} Nguyen, and Ilya Razenshteyn.
\newblock Beyond locality-sensitive hashing.
\newblock In {\em Proceedings of the ACM-SIAM Symposium on Discrete Algorithms
  (SODA)}, 2014.
\newblock Full version at \url{http://arxiv.org/abs/1306.1547}.

\bibitem[AIR18]{andoni18-icm}
Alexandr Andoni, Piotr Indyk, and Ilya Razenshteyn.
\newblock Approximate nearest neighbor search in high dimensions.
\newblock 2018.
\newblock This survey accompanied the ICM 2018 talk of Piotr Indyk. arXiv
  preprint arXiv:1806.09823.

\bibitem[AK10]{AK07}
Alexandr Andoni and Robert Krauthgamer.
\newblock The computational hardness of estimating edit distance.
\newblock {\em SIAM J. Comput. (SICOMP)}, 39(6):2398--2429, 2010.
\newblock Previously in FOCS'07.

\bibitem[ALRW17]{alrw17-tradeoffs}
Alexandr Andoni, Thijs Laarhoven, Ilya~P. Razenshteyn, and Erik Waingarten.
\newblock Optimal hashing-based time-space trade-offs for approximate near
  neighbors.
\newblock In {\em Proceedings of the ACM-SIAM Symposium on Discrete Algorithms
  (SODA)}, 2017.

\bibitem[ANN{\etalchar{+}}18]{annrw18-spectralGaps}
Alexandr Andoni, Assaf Naor, Aleksandar Nikolov, Ilya~P. Razenshteyn, and Erik
  Waingarten.
\newblock Data-dependent hashing via nonlinear spectral gaps.
\newblock In {\em Proceedings of the Symposium on Theory of Computing (STOC)},
  2018.

\bibitem[AR15]{AR-optimal}
Alexandr Andoni and Ilya Razenshteyn.
\newblock Optimal data-dependent hashing for approximate near neighbors.
\newblock In {\em Proceedings of the Symposium on Theory of Computing (STOC)},
  2015.
\newblock Full version at \url{http://arxiv.org/abs/1501.01062}.

\bibitem[BG19]{DBLP:journals/tcs/BartalG19}
Yair Bartal and Lee{-}Ad Gottlieb.
\newblock Approximate nearest neighbor search for $\ell_p$-spaces
  (2{\textless}p{\textless}{\(\infty\)}).
\newblock {\em Theor. Comput. Sci.}, 757:27--35, 2019.

\bibitem[DIIM04]{DIIM}
Mayur Datar, Nicole Immorlica, Piotr Indyk, and Vahab Mirrokni.
\newblock Locality-sensitive hashing scheme based on p-stable distributions.
\newblock In {\em Proceedings of the ACM Symposium on Computational Geometry
  (SoCG)}, 2004.

\bibitem[HIM12]{HIM12}
Sariel {Har-Peled}, Piotr Indyk, and Rajeev Motwani.
\newblock Approximate nearest neighbor: Towards removing the curse of
  dimensionality.
\newblock {\em Theory of Computing}, 1(8):321--350, 2012.

\bibitem[Ind98]{Ind98}
Piotr Indyk.
\newblock Faster algorithms for string matching problems: matching the
  convolution bound.
\newblock {\em Proceedings of the Symposium on Foundations of Computer Science
  (FOCS)}, 1998.

\bibitem[KN06]{KN}
Subhash Khot and Assaf Naor.
\newblock Nonembeddability theorems via {Fourier} analysis.
\newblock {\em Math. Ann.}, 334(4):821--852, 2006.
\newblock Preliminary version appeared in FOCS'05.

\bibitem[KNT21]{knt21-average}
Deepanshu Kush, Aleksandar Nikolov, and Haohua Tang.
\newblock Near neighbor search via efficient average distortion embeddings.
\newblock In {\em Proceedings of the ACM Symposium on Computational Geometry
  (SoCG)}, 2021.
\newblock To appear.

\bibitem[KR06]{KR06}
Robert Krauthgamer and Yuval Rabani.
\newblock Improved lower bounds for embeddings into {$L_1$}.
\newblock In {\em Proceedings of the ACM-SIAM Symposium on Discrete Algorithms
  (SODA)}, pages 1010--1017, 2006.

\bibitem[MN14]{mendel2014nonlinear}
Manor Mendel and Assaf Naor.
\newblock Nonlinear spectral calculus and super-expanders.
\newblock {\em Publications math{\'e}matiques de l'IH{\'E}S}, 119(1):1--95,
  2014.

\bibitem[MN{\etalchar{+}}15]{mendel2015expanders}
Manor Mendel, Assaf Naor, et~al.
\newblock Expanders with respect to hadamard spaces and random graphs.
\newblock {\em Duke Mathematical Journal}, 164(8):1471--1548, 2015.

\bibitem[Nao14]{naor2014comparison}
Assaf Naor.
\newblock Comparison of metric spectral gaps.
\newblock {\em Analysis and Geometry in Metric Spaces}, 2(1), 2014.

\bibitem[Nao17]{naor16-spectralGap}
Assaf Naor.
\newblock A spectral gap precludes low-dimensional embeddings.
\newblock In {\em Proceedings of the ACM Symposium on Computational Geometry
  (SoCG)}, 2017.
\newblock Also at \url{https://arxiv.org/abs/1611.08861}.

\bibitem[Nao20]{naor2020average}
Assaf Naor.
\newblock An average {John} theorem.
\newblock {\em arXiv preprint 1905.01280}, 2020.
\newblock To appear in Geometry \& Topology.

\bibitem[OR07]{OR-edit}
Rafail Ostrovsky and Yuval Rabani.
\newblock Low distortion embedding for edit distance.
\newblock {\em J. ACM}, 54(5), 2007.
\newblock Preliminary version appeared in STOC'05.

\end{thebibliography}

\appendix
\section{Average Embedding for $\ell_p$}
\label{apx:ell_p}

We consider the following average embedding from $\ell_p$ to $\ell_2$.
Let \begin{equation*}
    h(x)_i = \sign(x_i)|x_i|^{p/2} 
\end{equation*}
and \begin{equation*}
    f(x) = \dfrac{h(x)}{\|h(x)\|_2}\|x\|_p = h(x) \dfrac{\|x\|_p}{\|x\|_p^{p/2}} = h\left(\dfrac{x}{\|x\|_p}\right)\|x\|_2
\end{equation*}

\begin{theorem}
The embedding $f: \ell_p \rightarrow \ell_2$ is $p+1$ Lipschitz.

\end{theorem}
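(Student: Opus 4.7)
The plan is to bound the operator norm of the Jacobian $Df(x):\ell_p\to\ell_2$ pointwise and then obtain the global Lipschitz constant by integrating along line segments.

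First, I would put $f$ into a cleaner form. Since $\|h(x)\|_2 = \bigl(\sum_i |x_i|^p\bigr)^{1/2} = \|x\|_p^{p/2}$, we may write $f(x) = \|x\|_p^{1-p/2}\, h(x)$. Differentiating this product in an arbitrary direction $v \in \ell_p$ yields a decomposition $Df(x)v = \alpha + \beta$, where $\alpha = \|x\|_p^{1-p/2}\,Dh(x)v$ captures the coordinatewise derivative of $h$ (a diagonal multiplier with $(Dh(x)v)_j = \tfrac{p}{2}|x_j|^{p/2-1}v_j$), and $\beta = h(x)\cdot\partial_v\|x\|_p^{1-p/2}$ captures the effect of the scalar normalization, with $\partial_v\|x\|_p^{1-p/2} = (1-\tfrac{p}{2})\|x\|_p^{1-3p/2}\,\langle \sign(x)\odot|x|^{p-1},v\rangle$.

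The main technical step is to bound $\|\alpha\|_2$ and $\|\beta\|_2$ each by a constant multiple of $\|v\|_p$ via two applications of H\"older's inequality. For $\|\alpha\|_2^2 = (p/2)^2\|x\|_p^{2-p}\sum_j |x_j|^{p-2}v_j^2$, H\"older with conjugate pair $p/(p-2),\, p/2$ bounds the inner sum by $\|x\|_p^{p-2}\|v\|_p^2$, giving $\|\alpha\|_2 \le (p/2)\|v\|_p$. For $\beta$, the sum $\sum_j |h(x)_j|^2 = \|x\|_p^p$ factors out cleanly, reducing the bound to the scalar inequality $|\langle \sign(x)\odot|x|^{p-1},\, v\rangle| \le \|x\|_p^{p-1}\|v\|_p$ (H\"older with $p/(p-1),\, p$); this produces $\|\beta\|_2 \le (p/2-1)\|v\|_p$. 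The triangle inequality then gives $\|Df(x)v\|_2 \le (p-1)\|v\|_p$, which is in fact sharper than the claimed $p+1$.

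Finally, the pointwise Jacobian estimate lifts to a global Lipschitz bound by integrating along $\gamma(t) = (1-t)x + ty$: the map $f$ is smooth off the coordinate hyperplanes $\{x_i = 0\}$, each of which $\gamma$ crosses at most once, so $f\circ\gamma$ is piecewise smooth and absolutely continuous, and integrating the pointwise bound yields $\|f(x)-f(y)\|_2 \le (p-1)\|x-y\|_p$. I expect the only real obstacle is the bookkeeping of the two H\"older exponent pairs and the tracking of powers of $\|x\|_p$ across the two product-rule terms; the nondifferentiability at the coordinate hyperplanes is standard.
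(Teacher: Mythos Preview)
Your argument is correct and in fact yields the sharper constant $p-1$ rather than the paper's $p+1$, but the route is genuinely different. The paper does not differentiate: it writes $f(x)=\|x\|_p\cdot h(x/\|x\|_p)$, splits $f(x)-f(y)$ into a ``radial'' piece $\tfrac{h(y)}{\|h(y)\|_2}\bigl(\|x\|_p-\|y\|_p\bigr)$ (bounded by $\|x-y\|_p$ via the reverse triangle inequality) and a ``spherical'' piece $\|x\|_p\bigl(h(x/\|x\|_p)-h(y/\|y\|_p)\bigr)$, and then invokes the known fact that the Mazur map $h$ is $p/2$-Lipschitz on the unit sphere of $\ell_p$. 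Undoing the normalization costs another $\|x-y\|_p$, producing $p/2+p/2+1=p+1$. Your Jacobian computation is more self-contained (it recovers the Mazur estimate via H\"older rather than citing it) and avoids the renormalization loss, which is exactly where the paper's extra $+2$ appears.

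One small correction to your last paragraph: for $p>2$ the map $f$ is actually $C^1$ on all of $\ell_p\setminus\{0\}$, including the coordinate hyperplanes, since $t\mapsto\operatorname{sign}(t)|t|^{p/2}$ has a continuous derivative at $0$ when $p/2>1$. The only genuine singularity is at the origin, where the scalar factor $\|x\|_p^{1-p/2}$ blows up. But the origin lies on the segment $[x,y]$ only when $y$ is a negative multiple of $x$, and on such lines $f$ is linear (it is odd and positively $1$-homogeneous), so the Lipschitz bound there is $1$. This does not affect your conclusion.
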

\begin{proof}
We notice that $h$ is the Mazur map from $\ell_p$ to $\ell_2$. The embedding $f$ first normalizes its input, applies the Mazur map and the rescale it. We will therefore make use of the fact that the Mazur map is $\frac{p}{2}$-Lipscitz on the unit sphere. 

\begin{align*}
    \|f(x)-f(y)\|_2 &= \left\|\dfrac{h(x)}{\|h(x)\|_2}\|x\|_p - \dfrac{h(y)}{\|h(y)\|_2}\|y\|_p\right\|_2\\
   & = \left\|\|x\|_p\left(\dfrac{h(x)}{\|h(x)\|_2} - \dfrac{h(y)}{\|h(y)\|_2}\right)+ \dfrac{h(y)}{\|h(y)\|_2}\left(\|x\|_p-\|y\|_p\right)\right\|_2\\
   &\leq \|x\|_p \left\|\dfrac{h(x)}{\|h(x)\|_2} - \dfrac{h(y)}{\|h(y)\|_2}\right\|_2+ \left\|\dfrac{h(y)}{\|h(y)\|_2}\right\|_2\left|\|x\|_p-\|y\|_p\right|\\
   &\leq \|x\|_p \left\|\dfrac{h(x)}{\|h(x)\|_2} - \dfrac{h(y)}{\|h(y)\|_2}\right\|_2+ \left\|x-y\right\|_p\\
   &\stackrel{(a)}{=} \|x\|_p \left\|h\left(\dfrac{x}{\|x\|_p}\right) - h\left(\dfrac{y}{\|y\|_p}\right)\right\|_2+ \left\|x-y\right\|_p\\
   &\stackrel{(b)}{\leq}\dfrac{p}{2}\|x\|_p\left\|\dfrac{x}{\|x\|_p}-\dfrac{y}{\|y\|_p}\right\|_p +  \left\|x-y\right\|_p\\
   &= \dfrac{p}{2}\left\|x-\dfrac{\|x\|_p}{\|y\|_p}y\right\|_p +  \left\|x-y\right\|_p\\
   &= \dfrac{p}{2}\|x-y\|+\dfrac{p}{2}\left|\dfrac{\|x\|_p}{\|y\|_p}-1\right|\|y\|_p +  \left\|x-y\right\|_p\\
   &\leq(p+1)\left\|x-y\right\|_p
\end{align*}

Where (a) uses the fact that $h(x)/\|h(x)\|_2 = h(x/\|x\|_p)$ and (b) uses the fact that $h$ is $p/2$-Lipschitz on the unit sphere. 
\end{proof}

To complete the proof that the map $f$ is average embedding we need the following inequality, left as a conjecture.

\begin{conjecture}
For any $x_1, \dots, x_n$, we can find $z$ such that $x\rightarrow f(x-z)$ verifies, for some universal constant $C>0$:
\begin{equation*}
    \sum_{i,j} \|f(x_i - z) - f(x_j -z)\|_2^2 \geq C \sum_{i,j} \|x_i - x_j\|_p^2.
\end{equation*}
\end{conjecture}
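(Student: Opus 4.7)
My plan is to find a translation $z^*$ that \emph{balances} the embedded pointset, namely $\sum_{i=1}^n f(x_i - z^*) = 0$, and then read off the inequality by a direct expansion. The embedding $f$ has two elementary properties coming from its definition: (i) it preserves norms, $\|f(w)\|_2 = \|w\|_p$ (since $\|h(x)\|_2^2 = \|x\|_p^p$); and (ii) it is $1$-homogeneous and odd, i.e.\ $f(tw) = t f(w)$ for all $t \in \mathbb{R}$, which follows from the coordinate formula and $\|tw\|_p = |t|\|w\|_p$. Granted such a $z^*$, the identity $\sum_{i,j}\|a_i - a_j\|_2^2 = 2n \sum_i \|a_i\|_2^2 - 2\|\sum_i a_i\|_2^2$ applied to $a_i = f(x_i - z^*)$ collapses the left-hand side to $2n \sum_i \|x_i - z^*\|_p^2$. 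The $\ell_p$ triangle inequality then gives $\|x_i - x_j\|_p^2 \le 2(\|x_i - z^*\|_p^2 + \|x_j - z^*\|_p^2)$, which summed over $(i,j)$ yields $\sum_{i,j}\|x_i - x_j\|_p^2 \le 4n\sum_i \|x_i - z^*\|_p^2$. Together these give the desired inequality with $C = \tfrac{1}{2}$, a \emph{universal} constant (independent of $p$ and $n$).

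To produce $z^*$, I would run a topological-degree argument on $\Phi(z) := \sum_i f(x_i - z)$, viewed as a continuous self-map of $\mathbb{R}^d$ (identifying $\ell_p^d$ and $\ell_2^d$ as vector spaces). First, $f$ is a homeomorphism of $\mathbb{R}^d$: one can invert it coordinate-wise, $f^{-1}(y)_i = \sign(y_i)\,|y_i|^{2/p}\,\|y\|_2^{\,1-2/p}$, so both $f$ and $-nf$ have topological degree $\pm 1$. Second, the $(p+1)$-Lipschitz bound from the previous Theorem gives the uniform estimate
\[
\bigl\|\Phi(z) + n f(z)\bigr\|_2 \;=\; \Bigl\|\sum_i \bigl(f(x_i - z) - f(-z)\bigr)\Bigr\|_2 \;\le\; n(p+1)\,\max_i\|x_i\|_p,
\]
so $\Phi$ is a bounded perturbation of $-nf$, with bound independent of $z$. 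Since $\|nf(z)\|_2 = n\|z\|_p \to \infty$, on a ball $B_R$ of sufficiently large radius the linear homotopy $(1-t)\Phi(z) + t(-nf(z))$ avoids $0$ on $\partial B_R$ for every $t \in [0,1]$. Homotopy invariance of the Brouwer degree then yields $\deg(\Phi, B_R, 0) = \deg(-nf, B_R, 0) \ne 0$, so some $z^* \in B_R$ satisfies $\Phi(z^*) = 0$, as required.

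The algebraic half of the argument is routine; the main obstacle I expect is the topological existence step. One must confirm that $f$ is genuinely a homeomorphism (not only Lipschitz), that $-nf$ has nonzero degree, and that the uniform bound above really prevents the homotopy from meeting $0$ on a large sphere. In finite dimension none of these points is individually difficult, but together they require a setup that is unusual in the metric-embeddings literature. In infinite-dimensional $\ell_p$ an additional ingredient is needed, either a Leray--Schauder-type degree or a finite-dimensional truncation: project onto the coordinates in $\bigcup_i \mathrm{supp}(x_i)$ (which $f$ preserves) and pass to a limit. For the ANN application this caveat is moot, since there we work in finite dimension throughout.
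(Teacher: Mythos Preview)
The paper does not prove this statement at all --- it is explicitly labeled a conjecture and left open (the ``Independent work'' paragraph only notes that a \emph{related} version was resolved elsewhere). So there is no proof in the paper to compare your attempt against.

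That said, your argument is a correct proof in finite dimension. The algebraic half is immediate once the center $z^*$ exists: norm preservation $\|f(w)\|_2=\|w\|_p$ together with $\sum_i f(x_i-z^*)=0$ collapses the left side to $2n\sum_i\|x_i-z^*\|_p^2$, and the triangle inequality gives $\sum_{i,j}\|x_i-x_j\|_p^2\le 4n\sum_i\|x_i-z^*\|_p^2$, yielding $C=\tfrac12$ independent of $p,n,d$. The existence of $z^*$ via Brouwer degree also goes through: $f$ is a homeomorphism of $\mathbb{R}^d$ (the odd, positively $1$-homogeneous extension of the Mazur map between unit spheres, which is itself a homeomorphism), so $-nf$ has degree $\pm1$; the paper's $(p+1)$-Lipschitz theorem gives the uniform bound $\|\Phi(z)+nf(z)\|_2\le n(p+1)\max_i\|x_i\|_p$; and since $\|nf(z)\|_2=n\|z\|_p\to\infty$, the straight-line homotopy to $-nf$ avoids $0$ on any sufficiently large sphere, forcing $\deg(\Phi,B_R,0)=\pm1\ne0$ and hence a zero of $\Phi$. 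Your caveat about infinite-dimensional $\ell_p$ is fair but, as you note, irrelevant for the ANN application, where everything lives in $\mathbb{R}^d$.
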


\end{document}